\documentclass[final,3p,times,twocolumn]{elsarticle}

\usepackage{amssymb,amsmath,amsthm,bm}
\usepackage{geometry}
\usepackage{algorithm}
\usepackage{algorithmic}
\usepackage{graphicx}
\usepackage[retainorgcmds]{IEEEtrantools}
\graphicspath{{figures-pdf/}}

\usepackage[normalem]{ulem}
\usepackage{url}

\usepackage{lineno}
%\linenumbers\setlength\linenumbersep{1mm}
\usepackage{bm}

\newlength\imagewidth
\setlength\imagewidth{0.9\columnwidth}
\newlength\figwidth
\setlength\figwidth{0.45\columnwidth}

\journal{Elsevier}
\geometry{left=1.8cm,right=1.8cm}

\usepackage{color}
\definecolor{dgreen}{rgb}{0,.6,0}
   % my comments
   % your original texts commented by me
 % added or modified texts

%\newtheorem{proposition}{Proposition}

\newtheorem{Proposition}{Proposition}
\newtheorem{Property}{Property}

\newtheorem{Fact}{Fact}

\newcommand{\diag}{\mathop{\mathrm{diag}}}

\begin{document}

\begin{frontmatter}

\title{Cryptanalyzing a Class of Image Encryption Schemes Based on Chinese Remainder Theorem}

\author[cn-xtu-cie]{Chengqing Li\corref{corr}}
\ead{chengqingg@gmail.com}

\author[cn-xtu-cie]{Yuansheng Liu}

\author[hk-cityu]{Leo Yu Zhang}

\author[hk-cityu]{Kwok-Wo Wong}

\cortext[corr]{Corresponding author.}

\address[cn-xtu-cie]{MOE (Ministry of Education) Key Laboratory of Intelligent Computing and Information Processing, \\
College of Information Engineering,
Xiangtan University, Xiangtan 411105, Hunan, China}

% \address[cn-xtu-icip]{, Xiangtan University, China}

\address[hk-cityu]{Department of Electronic Engineering, City University of Hong Kong, Hong Kong}

\begin{abstract}
As a fundamental theorem in number theory, the Chinese Reminder Theorem (CRT) is widely used to construct cryptographic primitives.
This paper investigates the security of a class of image encryption schemes based on CRT, referred to as CECRT. Making use of
some properties of CRT, the equivalent secret key of CECRT can be recovered efficiently. The required number of pairs of chosen
plaintext and the corresponding ciphertext is only $(1+\lceil (\log_2L)/l \rceil)$. The attack complexity is only $O(L)$,
where $L$ is the plaintext length and $l$ is the number of bits representing
a plaintext symbol. In addition, other defects of CECRT such as invalid compression function and low sensitivity to plaintext, are reported.
The work in this paper will help clarify positive role of CRT in cryptology.
\end{abstract}

\begin{keyword}
compression\sep Chinese Remainder Theorem (CRT)\sep cryptanalysis\sep chosen-plaintext attack\sep encryption
\end{keyword}
\end{frontmatter}

\section{Introduction}
\label{sec:intro}

Both the transmission and the storage of digital data have dual requirements of high
operating efficiency and security, which lead to the joint operations of compression and encryption.
According to the order of the operations, joint compression and encryption schemes can
be categorized into three classes: encryption on compressed data \cite{Shujun:AnalysiBS:JEI06,Wu:Integrate:IEEETM:2005,Chen:IS:2010}; simultaneous compression and encryption \cite{Zhou:ISPL:2007,Zhou:IEEETCASI:2008,Jakimoski:Analysis:IEEEMultimedia08,Li:BreakingSecLZW:ICME2011,ChenJY:Joint:TCSII11}; compression on encrypted data \cite{chang:cryptanalysis:2002,Zhang:ITIFS:2011,Klinc:CompressAES:IEEEIT2012}. Recently, Chinese Remainder Theorem (CRT) is used in constructing simultaneous compression and encryption schemes or the basis of some efficient encryption algorithms.

The earliest known example of CRT can be found in the book, \textit{The Mathematical Classic of Sunzi},
written by Chinese mathematician Sun Tzu in the fifth century. In 1247, another Chinese mathematician Jiushao Qin
generalized it into a statement about simultaneous congruences and provided the complete solution in \textit{Mathematical Treatise in Nine Sections}
\cite{SHEN:AHES:1988}. Antiquity of Chinese mathematicians' study on the remainder problem (and maybe sparsity of Chinese mathematicians' contribution
to classic mathematics) made the complete form of the
statement be called Chinese Remainder Theorem. As a fundamental theorem in number theory, it has been widely used in
various fields of information security, e.g. speed up implementation of the RSA algorithm \cite{Yen:ITC:2003,Wang:ITSP:2010}, secret sharing \cite{Goldreich:ITIT:2000}, and secure code \cite{Ling:CodeCRT:IEEETIT2001}. For a comprehensive survey of the cryptographic applications of CRT and
chaos-based cryptanalysis, please refer to \cite{Ding:CRT:96} and \cite{Li:Rules:IJBC2006}, respectively.

As reviewed in \cite[Sec.~4.3.2]{Knuth:ACP:1997}, CRT supports the modular representation of a large number (dividend) as a set of numbers (remainders) in some given small domains. It converts the addition, subtraction, and multiplication of large numbers into very simple operations on small numbers. In addition, the conversion provides simultaneous operations on different moduli for parallel computing. Considering these benefits, a number of symmetric encryption schemes based on CRT have been proposed since 2001. The schemes designed in \cite{ammar:CRT:IEEE2001,Wang:crt:IWSRTA04,Vikram:CRT:ICSPCN07,Chang:CRT:ISJ10} all consider the gray level of some plain-image pixels as remainders, and the summing divisor of CRT as the cipher-element, where the moduli sequences are considered as the secret key or key stream.
Conversely, the scheme proposed in \cite{Meher:IISCSVP:2006} combines the gray levels of some plain-image pixels into a big divisor and stores the smaller remainder as the cipher-elements. Reference \cite{Aithal:JMMB:2010} follows this idea and further encrypts the remainders in a stream cipher mode, using two pseudo-random number sequences (PRNS). In 2013, an image encryption scheme, called CECRT in this paper, was proposed \cite{Zhu:CRT:SPIC13}. It first permutes the pixels of the plain-image and then performs the CRT operations as reported in \cite{ammar:CRT:IEEE2001,Wang:crt:IWSRTA04,Vikram:CRT:ICSPCN07,Chang:CRT:ISJ10}.
In \cite{Vikram:CRT:ICSPCN07,Zhu:CRT:SPIC13}, the authors claimed that their schemes possess the feature of simultaneous compression and encryption.
Since cryptanalysis is an integral work to evaluate security level of any encryption scheme \cite{Solak:IS:2011}, it is important to analyze security properties
of the encryption schemes based on CRT.

As CECRT is a typical example of the class of symmetric encryption schemes based on CRT and almost
all security defects of other schemes can be found in it, we will focus on breaking CECRT. We
found a property of CRT on the relationship among the product of some moduli, the divisor corresponding to a special set of remainders, and the divisor.
To the best of our knowledge, this is the first time that the property of CRT is reported. Based on it, we prove that the diffusion part of CECRT can be compromised efficiently using only a pair of chosen-plaintext
and the corresponding ciphertext. Then, the permutation part of CECRT can be broken using the existing standard cryptanalysis methods. In addition, the following security defects of CECRT are also reported:
1) the compression performance of CECRT is marginal and even negative; 2) the ciphertext is not sensitive to changes in the plaintext;
3) the moduli of CRT are not suitable to be used as a sub-key.

The rest of this paper is organized as follows. In Sec.~\ref{sec:CECRT}, CECRT is briefly described.
Then, the comprehensive cryptanalyses on CECRT are presented in Sec.~\ref{sec:cryptanalysis}, together with detailed experimental results.
The last section concludes the paper.

\section{Description of CECRT}
\label{sec:CECRT}

The kernel of CECRT relies on the Chinese Remainder Theorem,
which states that the system of linear congruences
\begin{equation}
\left\{x\equiv q_i\pmod{m_i}\right\}_{i=1}^t
\label{eq:congruence}
\end{equation}
has unique solution
\begin{equation}
x \equiv \sum_{i=1}^{t} e_i \tilde{m}_i q_i \pmod{m},
\label{eq:solutionCRT}
\end{equation}
when $m_1, m_2, \ldots, m_t$ are coprime integers,
where $\tilde{m}_i=m/m_i$, $m=\prod_{i=1}^tm_i$, $(e_i\tilde{m}_i)\equiv 1\pmod{m_i}$,
$\{q_i\}_{i=1}^t\subset \mathbb{Z}$, and $t$ is an integer larger than or equal to one. Let $\mathbf{P}=\{p_i\}_{i=1}^L$ and $\mathbf{C}=\{c_i\}_{i=1}^{L/k}$ denote the plaintext
and the corresponding ciphertext, respectively, where $k$ is the number of plaintext symbols encrypted at one time. Without loss of generality,
$L$ is assumed to be a multiple of $k$. Then, the basic operations of CECRT are described
as follows\footnote{For the sake of completeness, some notations in the original paper \cite{Zhu:CRT:SPIC13}
are modified provided that the essential form of CECRT is not changed.}.
\begin{itemize}
\item \textit{The secret key} consists of $k$ coprime integers, represented as an ordered set $\mathbf{N}=(n_i)_{i=1}^k$, and
$(x_0, y_0,$ $a_1, a_2, b_1, b_2, b_3)$, where $n_i \geq 256$, $(x_0, y_0)$ and $(a_1, a_2, b_1, b_2, b_3)$ are, respectively, the
initial condition and the control parameters of the $2$D hyper-chaotic system
\begin{equation}
\label{2dChaosSystem}
\begin{cases}
x_{n+1}=a_1x_n+a_2y_n,\\
y_{n+1}=b_1+b_2x_n^{2}+b_3y_n,\\
\end{cases}
\end{equation}
.

\item \textit{The initialization process:}
1) Iterate the chaotic system~(\ref{2dChaosSystem}) for $500+L$ times from the
initial condition $(x_0, y_0)$ with the control parameters
$(a_1, a_2, b_1, b_2, b_3)$ and obtain two PRNS $\mathbf{X} = \{x_i\}_{i=1}^{L}$ and
$\mathbf{Y}=\{y_i\}_{i=1}^{L}$ after discarded the first $500$ states;
2) Sort $\mathbf{X}$ and $\mathbf{Y}$ in ascending order, then derive two intermediate permutation sequences
$\mathbf{U}=\{u(i)\}_{i=1}^{L}$ and $\mathbf{V}=\{v(i)\}_{i=1}^{L}$ by comparing $\mathbf{X}$ and
$\mathbf{Y}$ with their sorted versions, respectively, where $x_{u(i)}$ denotes the
$i$-th smallest element of $\mathbf{X}$ and $y_{v(i)}$ denotes the
$i$-th smallest element of $\mathbf{Y}$;
3) Combine the two vectors $\mathbf{U}$ and $\mathbf{V}$ and
obtain a \textit{permutation relation vector} (a bijective map on the entities of a plaintext) $\mathbf{W}=\{w(i)\}_{i=1}^{L}$, where $w(i)=v(u(i))$.

\item \textit{The encryption process} is comprised of the following two basic operations:

\textit{1) Permutation:} For $i=1\sim L$, set
\begin{equation}
\nonumber
h_i = p_{w(i)}.
\end{equation}
\textit{2) Confusion:} For $j=1\sim L/k$, set $c_j$ as
the solution of the system of linear congruences $\left\{x\equiv h_{(j-1)k+i}\pmod{n_i}\right\}_{i=1}^k$,
namely
\begin{equation}
\label{equiv:encrypt}
c_j = \sum_{i=1}^{k} e_i  \tilde{n}_i \cdot p_{w((j-1)k+i)}  \bmod{n},
\end{equation}
where $\tilde{n}_i=(\prod_{j=1}^kn_j)/n_i$, and
\begin{equation}
(e_i\tilde{n}_i)\equiv 1\pmod{n_i}.
\label{eq:condition}
\end{equation}

\item \textit{The decryption process} consists of two steps:

\textit{1) Inverse Confusion:} For $i=1\sim L$, set
\begin{equation}
h_{i}=c_{(\lfloor (i-1)/k \rfloor + 1)}\bmod{n_{((i-1)\bmod k + 1)}}.
\label{eq:invertCRT}
\end{equation}

\textit{2) Inverse Permutation:}
For $i=1\sim L$, set
\begin{equation*}
p_{i}=h_{w^{-1}(i)},
\end{equation*}
where $\mathbf{W}^{-1}=\{w^{-1}(i)\}_{i=1}^{L}$ is the inverse of $\mathbf{W}$.
\end{itemize}

\section{Cryptanalysis}
\label{sec:cryptanalysis}

To carry out an efficient chosen-plaintext attack on CECRT,
some properties of Chinese Remainder Theorem are introduced first.

\subsection{Properties of Chinese Remainder Theorem}
\label{subsec:property}

\begin{Property}
\label{proposition:crtProperty}
Given a set $\{s_i\}_{i=1}^r\subset \{1, 2, \cdots, t\}$, one has
\begin{equation}
\prod_{i=1}^r m_{s_i}=\gcd\left(\sum_{i=1}^{r} (e_{s_i} \tilde{m}_{s_i})-1, m \right)
\label{eq:gcdproperty11}
\end{equation}
and
\begin{equation}
\prod_{i=1}^{t-r} m_{t_i}=\gcd\left(\sum_{i=1}^r(e_{s_i} \tilde{m}_{s_i}), \prod_{i=1}^{t-r} m_{t_i}\right),
\label{eq:gcdproperty12}
\end{equation}
where $\{t_i\}_{i=1}^{t-r}=\{1, 2, \cdots, t\}-\{s_i\}_{i=1}^r$.
\end{Property}
\begin{proof}
Given $i\in \{1, \cdots, r\}$,  $(e_{s_i} \tilde{m}_{s_i})\equiv 1\pmod{m_{s_i}}$,
one can obtain
\begin{equation}
\gcd(e_{s_i} \tilde{m}_{s_i}-1, m_{s_i})=m_{s_i}.
\label{eq:gcdms}
\end{equation}
For any $j\in \{1, \cdots, r\}$ and $j\neq i$,
$\gcd(e_{s_j} \tilde{m}_{s_j}, m_{s_i})=m_{s_i}$. Therefore, one gets
\begin{equation}
\gcd\left(\sum_{j=1, j\neq i}^r(e_{s_j} \tilde{m}_{s_j}), m_{s_i}\right)=m_{s_i}.
\label{eq:gcdmsi}
\end{equation}
Combining Eq.~(\ref{eq:gcdms}) and Eq.~(\ref{eq:gcdmsi}), the result is
\begin{equation*}
\gcd\left(\sum_{j=1}^r(e_{s_j} \tilde{m}_{s_j})-1, m_{s_i}\right)=m_{s_i}.
\label{eq:gcdmsij}
\end{equation*}
Referring to Fact~\ref{proposition:gcdone}, it can be derived that
\begin{IEEEeqnarray}{rCl}
\prod_{i=1}^rm_{s_i} & = & \prod_{i=1}^r\gcd\left(\sum_{j=1}^r(e_{s_j} \tilde{m}_{s_j})-1, m_{s_i}\right) \nonumber\\
                     & = & \gcd\left(\sum_{i=1}^r(e_{s_i} \tilde{m}_{s_i})-1, \prod_{i=1}^rm_{s_i}\right) \label{eq:gcdmsijone}.
\end{IEEEeqnarray}
As
\begin{equation*}
\sum_{j=1}^r\left(e_{s_j} \tilde{m}_{s_j}\right)=\sum_{j=1}^r\left(e_{s_j} \cdot \frac{\prod_{i=1}^r m_{s_i}}{m_{s_j}} \cdot \frac{m}{\prod_{i=1}^rm_{s_i}}\right),
\end{equation*}
one can get
\begin{equation*}
\frac{m}{\prod_{i=1}^rm_{s_i}}=\gcd\left(\sum_{j=1}^r(e_{s_j} \tilde{m}_{s_j}),\frac{m}{\prod_{i=1}^rm_{s_i}}\right),
\end{equation*}
which is the same as Eq.~(\ref{eq:gcdproperty12}). Referring to Proposition~\ref{proposition:gcd}, one can obtain
\begin{equation}
\gcd\left(\sum_{i=1}^r(e_{s_i} \tilde{m}_{s_i})-1,\frac{m}{\prod_{i=1}^rm_{s_i}}\right)=1
\label{eq:gcdone}
\end{equation}
from the above equation.
Multiplying the two sides of Eq.~(\ref{eq:gcdmsijone}) and Eq.~(\ref{eq:gcdone}), respectively, one has
\begin{IEEEeqnarray*}{rCl}
\prod_{i=1}^rm_{s_i} & = & \gcd\left(\sum_{i=1}^r(e_{s_i} \tilde{m}_{s_i})-1, \prod_{i=1}^rm_{s_i}\right)\\
                     &   & {\ } \cdot\gcd\left(\sum_{i=1}^r(e_{s_i} \tilde{m}_{s_i})-1,\frac{m}{\prod_{i=1}^rm_{s_i}}\right)\\
                     & = & \gcd\left(\sum_{i=1}^{r} (e_{s_i} \tilde{m}_{s_i})-1, m \right).
\end{IEEEeqnarray*}
\end{proof}

\begin{Fact}
\label{proposition:gcdone}
Given three integers $a$, $b$ and $c$. If $\gcd(b, c) =1$, then
$$\gcd(a, bc)= \gcd(a, b)\cdot \gcd(a, c).$$
\end{Fact}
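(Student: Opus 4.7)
My plan is to prove the multiplicativity $\gcd(a,bc)=\gcd(a,b)\cdot\gcd(a,c)$ under $\gcd(b,c)=1$ by establishing the two divisibility directions $\gcd(a,b)\gcd(a,c) \mid \gcd(a,bc)$ and $\gcd(a,bc) \mid \gcd(a,b)\gcd(a,c)$ separately, then concluding equality by non-negativity.

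For the easy direction, I would set $d_1=\gcd(a,b)$ and $d_2=\gcd(a,c)$. Because $d_1\mid b$ and $d_2\mid c$ while $\gcd(b,c)=1$, the integers $d_1$ and $d_2$ are coprime. Combined with $d_1\mid a$ and $d_2\mid a$, this yields $d_1 d_2\mid a$. Since trivially $d_1 d_2\mid bc$, the definition of $\gcd$ gives $d_1 d_2\mid\gcd(a,bc)$.

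The converse direction is the more interesting step and where the hypothesis $\gcd(b,c)=1$ is really used. Writing $d=\gcd(a,bc)$, I would prove the auxiliary factorization that $d=b_1 c_1$ for some $b_1\mid b$ and $c_1\mid c$. Concretely I would set $b_1=\gcd(d,b)$ and $c_1=d/b_1$, and then check that $c_1\mid c$ by verifying $\gcd(c_1,b/b_1)=1$: any common prime divisor would promote $b_1$ to a strictly larger common divisor of $d$ and $b$, contradicting the definition of $b_1$. Once the factorization is in hand, $b_1\mid d\mid a$ together with $b_1\mid b$ forces $b_1\mid d_1$, and symmetrically $c_1\mid d_2$, so $d=b_1 c_1\mid d_1 d_2$.

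The main obstacle is the factorization lemma in the second direction; everything else is bookkeeping from the definition of $\gcd$. An alternative that sidesteps that lemma is a prime-by-prime argument, comparing $p$-adic valuations on both sides and exploiting that $\gcd(b,c)=1$ forces $\min(v_p(b),v_p(c))=0$, which makes $v_p(bc)=\max(v_p(b),v_p(c))$ and hence $\min(v_p(a),v_p(bc))=\min(v_p(a),v_p(b))+\min(v_p(a),v_p(c))$ in every case. This is shorter but less self-contained; given that the Fact is invoked only once in the proof of Property~\ref{proposition:crtProperty}, the elementary divisibility proof sketched above fits the style of the paper better.
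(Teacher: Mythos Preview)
Your argument is correct. The two divisibility directions are set up cleanly, and the only nontrivial step---factoring $d=\gcd(a,bc)$ as $b_1c_1$ with $b_1\mid b$ and $c_1\mid c$---is justified: from $b_1=\gcd(d,b)$ and $c_1=d/b_1$ one has $c_1\mid (b/b_1)\,c$, and a common prime of $c_1$ and $b/b_1$ would make $p\,b_1$ a common divisor of $d$ and $b$ exceeding $b_1$, contradicting the choice of $b_1$; hence $c_1\mid c$ by Euclid's lemma. The $p$-adic alternative you mention is also fine.

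As for the comparison: the paper does not prove this statement at all. It is recorded as a \emph{Fact} and invoked without justification inside the proof of Property~\ref{proposition:crtProperty}. So there is no ``paper's own proof'' to match; your proposal simply supplies what the paper takes for granted. If anything, your write-up is more detailed than the surrounding text warrants---a one-line appeal to the multiplicativity of $\gcd$ in the second argument (or to prime factorizations) would be stylistically closer to how the paper treats such auxiliary facts.
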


\begin{Proposition}
\label{proposition:gcd}
Given two integers $a$, $b$ and $|a|+|b|\neq 0$.
If $\gcd(a, b)=b$, then $\gcd(a-1, b)=1$.
\end{Proposition}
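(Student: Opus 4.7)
The plan is to reduce the claim to the elementary fact that any two consecutive integers are coprime. The hypothesis $\gcd(a,b)=b$ is, under the usual convention that $\gcd$ returns a nonnegative value, just the assertion that $b \mid a$ with $b \ge 0$. I would therefore begin by unpacking it as $a = kb$ for some integer $k$, and first dispose of the side condition: if $b=0$, then $\gcd(a,0)=|a|$ would have to equal $0$, forcing $a=0$, which is ruled out by $|a|+|b|\neq 0$. Hence one may assume $b \ge 1$, in which case $\gcd(a-1,b)$ is well-defined and positive.

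Given this setup, the core argument is a one-line divisibility chase. Setting $d=\gcd(a-1,b)$, one has $d \mid b$ and, since $b \mid a$, also $d \mid a$; combined with $d \mid a-1$, this gives $d \mid a-(a-1)=1$, so $d=1$. I do not expect a genuine obstacle: the proposition is little more than a repackaging of the coprimality of the consecutive integers $a-1$ and $a$ relative to any divisor of $a$. The only care needed is the handling of the degenerate case $b=0$ flagged by the hypothesis $|a|+|b|\neq 0$, and no earlier result from the excerpt (in particular not Fact~\ref{proposition:gcdone}) is required in the argument.
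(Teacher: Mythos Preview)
Your proof is correct and follows essentially the same elementary idea as the paper's: write $a=kb$ and exploit the coprimality of consecutive integers. The paper phrases it via the Euclidean reduction $\gcd(kb-1,b)=\gcd(b-1,b)=\gcd(b-1,1)=1$, while you argue directly that any common divisor of $a-1$ and $b$ divides $a-(a-1)=1$; you also make the $b=0$ edge case explicit, which the paper leaves implicit.
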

\begin{proof}
Since $\gcd(a, b)=b$, there exists an integer $k$ such that $a = k\cdot b$.
Then, $\gcd(a-1, b)=\gcd(k \cdot b-1,b)=\gcd((k-1)\cdot b + (b-1), b) =\gcd(b-1, b) = \gcd(b-1, (b-1)+1) =\gcd(b-1, 1) = 1$.
\end{proof}

\begin{Property}
The coefficients of CRT, $\{e_i\}_{i=1}^t$, $\{\tilde{m}_i\}_{i=1}^t$, satisfy
\begin{equation}
\sum_{i=1}^{t} e_i \tilde{m}_i\equiv 1\pmod{m}.
\label{eq:gcdproperty2}
\end{equation}
\end{Property}
\begin{proof}
In the linear congruences shown in Eq.~(\ref{eq:congruence}), assume $q_i\equiv 1$. Obviously,
$x=1$. As the solution is unique, one can obtain Eq.~(\ref{eq:gcdproperty2}) by setting the values of $x$ and $\{q_i\}_{i=1}^t$ in Eq.~(\ref{eq:solutionCRT}).
\end{proof}

\begin{Property}
Equation~(\ref{eq:congruence}) and Eq.~(\ref{eq:solutionCRT}) determine a pair of reciprocal bijective map between $\mathbb{Z}_m$ and $\{(q_1, q_2, \cdots, q_t)\}$, where
$q_i$ goes through $\mathbb{Z}_{m_i}$ for $i=1\sim t$.
\end{Property}
\begin{proof}
Obviously, Eq.~(\ref{eq:congruence}) determines a bijective map between $\mathbb{Z}_m$ and $\{(q_1, q_2, \cdots, q_t)\}$.
As the solution of Eq.~(\ref{eq:solutionCRT}) is unique, one can assure that the map determined by Eq.~(\ref{eq:solutionCRT}) is the
reciprocal of that confirmed by Eq.~(\ref{eq:congruence}).
\end{proof}

\subsection{Chosen-plaintext attack}
\label{subsec:cpa}

\subsubsection{Determining the modulus $n=\prod_{i=1}^kn_i$}

Assume the ciphertext symbols distribute uniformly, one has
\begin{equation*}
Prob\left(\max\left(\{c_i\}_{i=1}^{L/k}\right)\neq (n-1)\right)=\left(1-1/n\right)^{L/k}.
\end{equation*}
As $f(n)=\left(1-1/n\right)^{n}$ is a monotonic increasing function, $0.31<f(n)<\lim\limits_{n\rightarrow \infty}\left(1-1/n\right)^{n}=e^{-1}$ when $n> 3$.
In general, $(L/k)<<n$, one has $0.31<f(n)<\left(1-1/n\right)^{L/k}$. As a result, the modulus $n$ is not equal to $\max\left(\{c_i\}_{i=1}^{L/k}\right)+1$ with a negligible probability.

The value of the modulus $n$ can be guessed from the above approximated value and then be verified by checking its coincidence with the values of its factors obtained in the forthcoming attack. Besides this, one can also recover it from the ciphertext of a chosen-plaintext or a known-plaintext of special format such as a binary image. If $\mathbf{P}=\{p_i\}_{i=1}^L\subset\{0, 1\}$, then one can assure $1\le S\le 2^k$ from Eq.~(\ref{equiv:encrypt}), where
$S$ denotes the cardinality of $\mathbb{B}$, the set containing different numbers in the corresponding ciphertext. Add up each pair of the elements in $\mathbb{B}$ and put all the results into an array $\mathbb{\widehat{B}}$ of upper limit size $\binom{2^k}{2}=2^{k-1}(2^k-1)$.
From Property~3, one can see that there is at most one element in $\mathbb{\widehat{B}}$ which is equal to 1. By Property~2, one can assure that the element in $\mathbb{B}$ possessing the highest frequency is $n+1$ since
about $S/2-1$ elements in $\mathbb{\widehat{B}}$ are equal to $(n+1)\in (1, 2n-2]$. Other elements exist at a probability much lower than $(S/2-1)/\binom{S}{2}=\frac{S-2}{S(S-1)}$ when $S$ approaches $2^k$. Therefore, the value of $n$ can be easily found using one of the three methods described above.

\subsubsection{Recovering the unordered set $\{n_i\}_{i=1}^k$}

Once the value of $n$ has been confirmed, the elements in $\mathbf{N}$ may be recovered
by factoring it. However, the computational complexity is extremely high, especially when $n$ is very large. Fortunately,
this can be performed efficiently by comparing the above chosen-plaintext and its ciphertext. As for the
binary plaintext, one can obtain the following Proposition from Eq.~(\ref{equiv:encrypt}) and Property~1.
\begin{Proposition}
When the plaintext $\mathbf{P}=\{p_i\}_{i=1}^L\subset\{0, 1\}$, its corresponding ciphertext satisfies
\begin{equation*}
\gcd(c_j-1, n)=\prod_{i=1}^r n_{s_i}
\end{equation*}
and
\begin{equation*}
\gcd(c_j, n)=\left(\frac{n}{\prod_{i=1}^r n_{s_i}}\right)\cdot d,
\end{equation*}
where
\begin{equation}
\{s_i\}_{i=1}^r=\{i \mid \ p_{w((j-1)k+i)}=1, i=1\sim k\},
\label{eq:set1}
\end{equation}
$d$ divides $n$ exactly, and $j=1\sim L/k$.
\end{Proposition}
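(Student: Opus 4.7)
The plan is to specialize the confusion formula Eq.~(\ref{equiv:encrypt}) to a binary plaintext and then invoke Property~1 twice, using Fact~1 to split one of the greatest common divisors. Substituting $p_{w((j-1)k+i)} \in \{0,1\}$ and letting $\{s_i\}_{i=1}^{r}$ (as in Eq.~(\ref{eq:set1})) collect exactly the indices where the plaintext is $1$, Eq.~(\ref{equiv:encrypt}) collapses to
\begin{equation*}
c_j \equiv \sum_{i=1}^{r} e_{s_i}\, \tilde{n}_{s_i} \pmod{n}.
\end{equation*}

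For the first identity I would use the invariance $\gcd(a,n) = \gcd(a \bmod n,\, n)$, which gives $\gcd(c_j-1,n) = \gcd\!\left(\sum_{i=1}^{r} e_{s_i}\,\tilde{n}_{s_i} - 1,\, n\right)$, and then read off the right-hand side directly from Eq.~(\ref{eq:gcdproperty11}) of Property~1 under the substitutions $m_i \leftarrow n_i$, $m \leftarrow n$, $t \leftarrow k$. For the second identity, let $\{t_i\}_{i=1}^{k-r}$ denote the complementary index set so that $n = \prod_{i=1}^{r} n_{s_i} \cdot \prod_{i=1}^{k-r} n_{t_i}$, and note that the two factors are coprime because the $n_i$ are. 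Fact~1 then yields
\begin{equation*}
\gcd(c_j,\, n) = \gcd\!\left(c_j,\, \prod_{i=1}^{r} n_{s_i}\right) \cdot \gcd\!\left(c_j,\, \prod_{i=1}^{k-r} n_{t_i}\right),
\end{equation*}
where Eq.~(\ref{eq:gcdproperty12}) of Property~1 identifies the second factor as $\prod_{i=1}^{k-r} n_{t_i} = n/\prod_{i=1}^{r} n_{s_i}$, while the first factor is simply some divisor $d$ of $\prod_{i=1}^{r} n_{s_i}$, hence of $n$.

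The only step that needs any real care is tracking Property~1 across the outer $\bmod\, n$ used in forming $c_j$: the gcd-mod invariance handles Eq.~(\ref{eq:gcdproperty11}), and the analogous reasoning works for the $\prod n_{t_i}$ term because $\prod n_{t_i}$ divides $n$, so reducing modulo $n$ preserves divisibility by $\prod n_{t_i}$. Nothing sharper than $d \mid n$ is asserted, so no structural statement about which specific $n_{s_i}$ actually divide $c_j$ is required, and the rest of the argument is routine substitution.
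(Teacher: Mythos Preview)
Your proposal is correct and matches the paper's own approach: the paper does not give a detailed proof of this Proposition but simply states that it follows from Eq.~(\ref{equiv:encrypt}) and Property~1, which is precisely the route you take. Your explicit use of the gcd--mod invariance and of Fact~1 to split $\gcd(c_j,n)$ over the two coprime factors $\prod n_{s_i}$ and $\prod n_{t_i}$ fills in exactly the details the paper leaves implicit.
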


From Proposition~2, one can see that all the elements of $\widehat{\mathbf{N}}$ come from the elements in $\{n_i\}_{i=1}^k$ or multiple of their products (maybe themselves), where
\[\widehat{\mathbf{N}}=\{\gcd(c_j-1, n), \gcd(c_j, n), n/\gcd(c_j-1, n)\}_{j=1}^{L/k}.\]
Making use of the properties of $\widehat{\mathbf{N}}$, $\{n_i\}_{i=1}^k$
can be recovered by the following steps:
\begin{itemize}
\item \textit{Step 1}: Set $\widetilde{\mathbf{N}}$ with the smallest numbers in $\widehat{\mathbf{N}}$ which are co-prime with each other.
If the cardinality of $\widetilde{\mathbf{N}}$ is equal to $k$, one can assure that $\widetilde{\mathbf{N}}=\{n_i\}_{i=1}^k$ and stops the search.

\item \textit{Step 2}: For any two elements of $\widehat{\mathbf{N}}$, add the greatest common divisor (gcd) of them to $\widehat{\mathbf{N}}$
if the gcd is not equal to one.

\item \textit{Step 3}: For any two elements of $\widehat{\mathbf{N}}$, add the quotient of them into $\widehat{\mathbf{N}}$
if one can be divided with no remainder by another. Go to \textit{Step 1}.
\end{itemize}

Obviously, the above attack can be carried out in the same way if any two chosen-plaintexts, whose difference is a binary text,
and the corresponding ciphertexts are available.

\subsubsection{Breaking the permutation part of CECRT}

From Proposition~\ref{prop:equikey}, one can see that $\widehat{\mathbf{N}}$ is equivalent to $\mathbf{N}$ in terms of the
function on decrypting CECRT. This means that CECRT is reduced to a permutation-only encryption scheme once the unordered set $\widehat{\mathbf{N}}$
has been recovered. By the general cryptanalysis method based on multi-branch tree proposed in \cite{Lcq:Optimal:SP11}, the equivalent permutation part of CECRT can be revealed from $\lceil (\log_2L)/l \rceil$ pairs of chosen-plaintext, where $l$ is the number of bits representing
a plaintext symbol \cite{Li:AttackingPOMC2008}. Note that the binary chosen-plaintext used in
the above sub-section can also be employed to verify some permutation relations utilizing Eq.~(\ref{eq:set1}).
\begin{Proposition}
The order of the elements in set $\{n_i\}_{i=1}^k$
has no influence on the decryption of CECRT.
\label{prop:equikey}
\end{Proposition}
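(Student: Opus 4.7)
The plan is to show that permuting the ordered set $\mathbf{N}=(n_i)_{i=1}^k$ by any permutation $\sigma$ of $\{1,\ldots,k\}$ yields exactly the same ciphertext as the original scheme, provided that the permutation relation vector $\mathbf{W}$ is updated by applying $\sigma^{-1}$ within each block of length $k$. Since such a modified $\mathbf{W}'$ is itself a valid permutation relation vector (CECRT imposes no structural constraint on $\mathbf{W}$ beyond being a bijection on $\{1,\ldots,L\}$), the unordered set is as informative as the ordered set for both encryption and decryption.

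First, I would start from the encryption equation~(\ref{equiv:encrypt}) and reindex the summation. If $\mathbf{N}$ is replaced by $\mathbf{N}_\sigma=(n_{\sigma(i)})_{i=1}^k$, then the corresponding CRT coefficients become $\tilde{n}'_i=\tilde{n}_{\sigma(i)}$ and $e'_i=e_{\sigma(i)}$, since the product $n=\prod n_i$ is symmetric in the indices and the defining congruence~(\ref{eq:condition}) depends only on the pair $(e_i,n_i)$. Writing out the new cipher-element $c'_j$ and substituting $i'=\sigma(i)$ gives
\begin{equation*}
c'_j=\sum_{i'=1}^k e_{i'}\tilde{n}_{i'}\, p_{w((j-1)k+\sigma^{-1}(i'))}\bmod n,
\end{equation*}
which is of the same form as~(\ref{equiv:encrypt}) but with the bijection $w$ replaced by $w'$ defined by $w'((j-1)k+i')=w((j-1)k+\sigma^{-1}(i'))$ for every $j$ and $i'$.

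Second, I would verify that $w'$ is itself a permutation of $\{1,\ldots,L\}$: inside each block of $k$ consecutive positions, $w'$ is the composition of $w$ with a within-block shuffle by $\sigma^{-1}$, so $w'$ is a bijection. Hence the ordered key $(\mathbf{N}_\sigma,\mathbf{W}')$ produces the identical ciphertext as the original $(\mathbf{N},\mathbf{W})$. By symmetry, the inverse-confusion step~(\ref{eq:invertCRT}) then reproduces the same intermediate $\mathbf{h}$ (up to the within-block reordering by $\sigma$), which is undone by $\mathbf{W}'^{-1}$ in the inverse-permutation step, recovering the same plaintext.

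The main obstacle is mostly bookkeeping: making the index substitution $i'=\sigma(i)$ explicit and checking that the CRT coefficients $e_i$, $\tilde{n}_i$ really permute along with $n_i$ without altering any modular constraint. Once these are in place, the proposition follows because the attacker (or a legitimate decryptor) needs only the unordered set $\{n_i\}_{i=1}^k$ together with any permutation $\mathbf{W}'$ compatible with it, and such a compatible $\mathbf{W}'$ always exists.
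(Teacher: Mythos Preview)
Your argument is correct and follows essentially the same idea as the paper's proof: a permutation of the moduli is absorbed by a block-diagonal permutation of the position vector $\mathbf{W}$, so the unordered set $\{n_i\}$ together with the (to-be-recovered) equivalent permutation suffices for decryption. The only cosmetic difference is direction and notation---you argue via the encryption map (showing $(\mathbf{N}_\sigma,\mathbf{W}')$ and $(\mathbf{N},\mathbf{W})$ yield the same ciphertext) using explicit index substitution, whereas the paper argues via the inverse-confusion step~(\ref{eq:invertCRT}) using permutation matrices $\mathbf{T}$ and $\widehat{\mathbf{T}}=\diag(\mathbf{T},\ldots,\mathbf{T})$; the content is the same.
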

\begin{proof}
Once the set $\{n_i\}_{i=1}^k$ has been determined, one can get an approximate version of $\mathbf{N}$ (denoted as $\mathbf{N}^*$), with
elements of $\mathbf{N}$ in any order, namely,
\begin{equation*}
\mathbf{N}^* =\mathbf{N}\cdot \mathbf{T},
\end{equation*}
where $\mathbf{T}$ is a permutation matrix of size $k\times k$.
From Eq.~(\ref{eq:invertCRT}), the approximate of $\{h_i\}_{i=1}^k$ corresponding to $ \mathbf{T}$
can be calculated by
\begin{align}
\left(h^*_{(j-1)\cdot k+i}\right)_{i=1}^k & = \left(h_{(j-1)\cdot k+i}\right)_{i=1}^k\cdot \mathbf{T} \nonumber \\
                                          & = \left(p_{w((j-1)\cdot k+i)}\right)_{i=1}^k\cdot \mathbf{T},
\label{eq:permuted}
\end{align}
where $j=1\sim L/k$.
Denote $\mathbf{\widetilde{W}}$ by
\begin{equation*}
\mathbf{\widetilde{W}}=\mathbf{W}\cdot \mathbf{\widehat{T}},
\end{equation*}
where $\mathbf{\widehat{T}}=\diag(\mathbf{T},\cdots, \mathbf{T})$ is a permutation matrix of size
$L\times L$, whose main diagonal blocks are all $\mathbf{T}$. Obviously, one has
\begin{IEEEeqnarray}{rCl}
\mathbf{\widetilde{W}}^{-1} & = & \mathbf{W}^{-1} \cdot \mathbf{\widehat{T}}^{-1}\nonumber\\
                            & = & \mathbf{W}^{-1} \cdot \diag(\mathbf{T}^{-1},\cdots, \mathbf{T}^{-1}).\label{eq:falsepermuter}
\end{IEEEeqnarray}
Combining Eq.~(\ref{eq:permuted}) and Eq.~(\ref{eq:falsepermuter}), one can assure that the influence of $\mathbf{T}$ on the decryption of CECRT
is elliminated.
\end{proof}

\subsubsection{Analyzing performance of the breaking method}

The probability that the sub-key for confusion in CECRT can be exactly recovered depends on
whether $\widehat{\mathbf{N}}$ contains independent information of $n_i$, $i=1\sim k$. It is
difficult to work out the exact probability. However, by a large number of random experiments,
we found that this probability approaches one when the percentage of non-zero elements in the available chosen plaintext
is not too small. It is easy to verify that the complexity of recovering $\{n_i\}_{i=1}^k$ is $O(L)$.

Although the sub-key used in the permutation part is difficult to be determined, its equivalent
version $\mathbf{W}$ can be fully revealed by some chosen plaintexts. It can also be recovered
at a very high accuracy even if only some known plaintexts are available \cite{Li:AttackingPOMC2008}.
Utilizing the data structure of multi-branch tree, intersection of the multiple permutation relations generated by
the known plaintexts can be converted into linear operation of visit. So, the complexity required in recovering equivalent version of the
permutation part is $O(L)$ \cite{Lcq:Optimal:SP11}.

\subsubsection{Verifying the breaking method with experimental results}

To verify the effectiveness of the proposed chosen-plaintext attack,
a large number of experiments have been performed using plain-images of various sizes.
Here a typical example is shown. In this experiment, the secret parameters of CECRT are chosen
as follows: $(n_1, n_2, n_3, n_4)=(311, 313, 317, 293)$, $(x_0, y_0)=(0.0394, 0.001)$
and $(a_1, a_2, b_1,b_2, b_3)=(-0.95, $ $-1.3, -0.45, 2.4, 1.05)$, which is the key configuration used in \cite{Zhu:CRT:SPIC13}.
The available information for breaking the confusion sub-key in CECRT includes the binary image ``Bricks'' of size $512\times 512$ and the corresponding cipher-image. They are depicted in Fig.~\ref{figure:encryptpeppers}, where four consecutive pixels are used to denote one cipher-element, and the width is the same as that of the plain-image.
\begin{figure}[!htb]
\centering
\begin{minipage}{\figwidth}
\centering
\includegraphics[width=\textwidth]{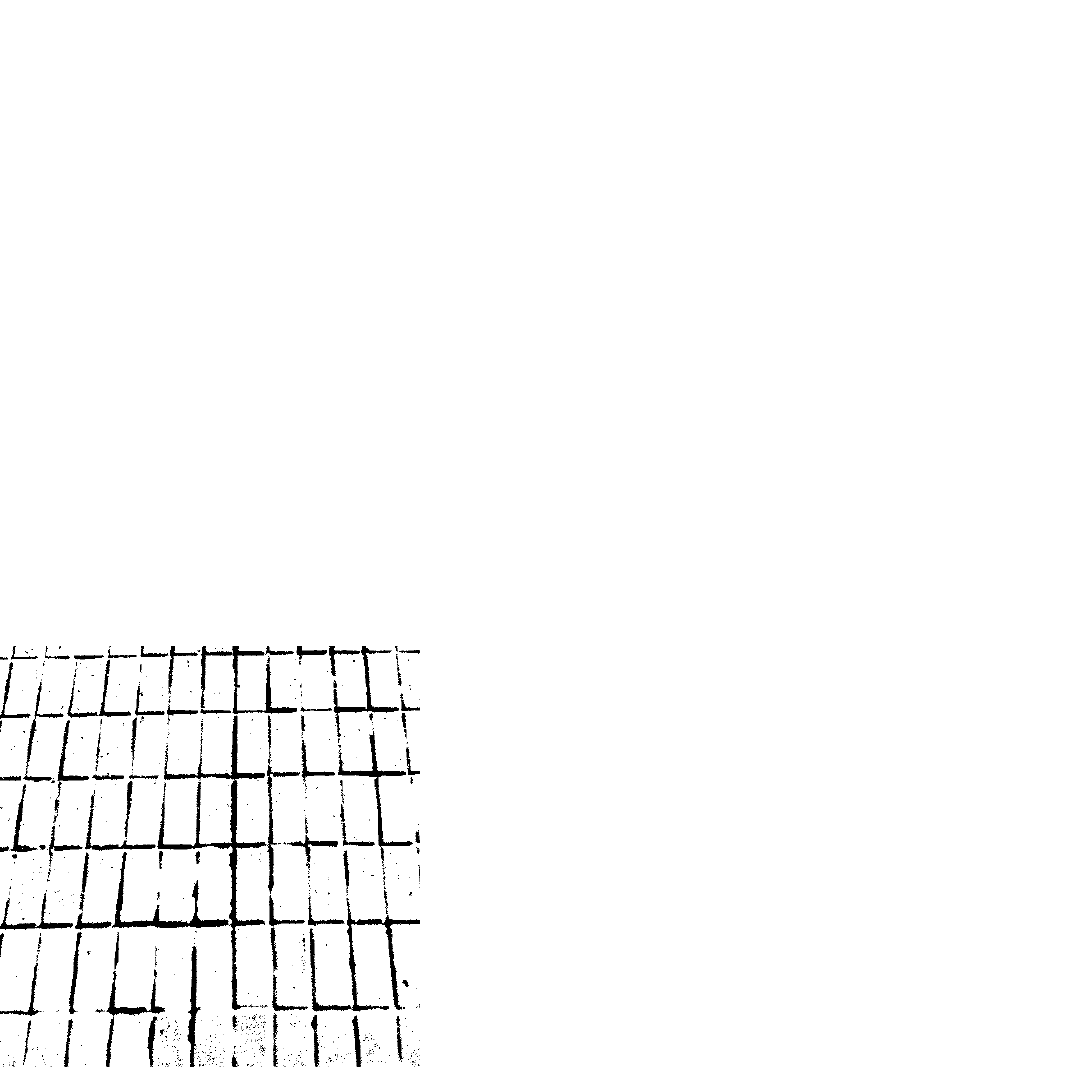}
a)
\end{minipage}
\begin{minipage}{\figwidth}
\centering
\includegraphics[width=\textwidth]{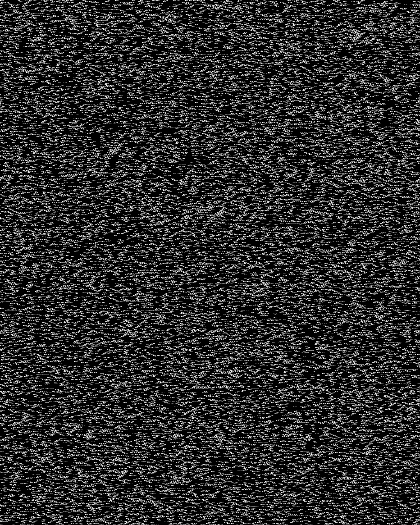}
b)
\end{minipage}
\caption{The binary plain-image ``Bricks'' and the corresponding
cipher-image: a) the plain-image; b) the cipher-image.}
\label{figure:encryptpeppers}
\end{figure}

As shown in Fig.~\ref{figure:distribution}a), distribution of the elements in $\mathbb{\widehat{B}}$ is not
uniform and one element exists at a much higher probability than others, which agrees with the above
analysis. To further verify this, distribution of the elements in $\mathbb{\widehat{B}}$ when $k=6, 8, 10$
are also plotted in Fig.~\ref{figure:distribution}, where $\{n_i\}_{i=1}^6=(419, 323, 649, 501, 302, 449\}$, $\{n_i\}_{i=1}^8=(573, 593,$ $443, 577, 341, 428, 293, 541)$, $\{n_i\}_{i=1}^{10}=$ $(323,$ $273, 263,$ $349, 625, 409, 436, 451, 389, 479)$, respectively. As a result, $n=311\cdot313\cdot317\cdot293=9041315183$ is easily obtained. Then, one can get $\widehat{\mathbf{N}}=\{30857731, 91709, 98587, 29071753, 28885991, 28521499, 928$ $81,91123, 99221, 311, 97343, 317, 313, 293\}$.
In this case, $\{n_i\}_{i=1}^4$ can be recovered in the first search step. Finally, $\lceil \log_2(512\cdot512)/8\rceil=3$ chosen plain-images are employed to recover the permutation relation vector. As the detailed results of general cryptanalysis problems have been presented in \cite{Li:AttackTDCEA2005,Li:AttackingPOMC2008,Lcq:Optimal:SP11}, the related experimental results are omitted here.
\begin{figure}[!htb]
\centering
\begin{minipage}{2\figwidth}
\centering
\includegraphics[width=\textwidth]{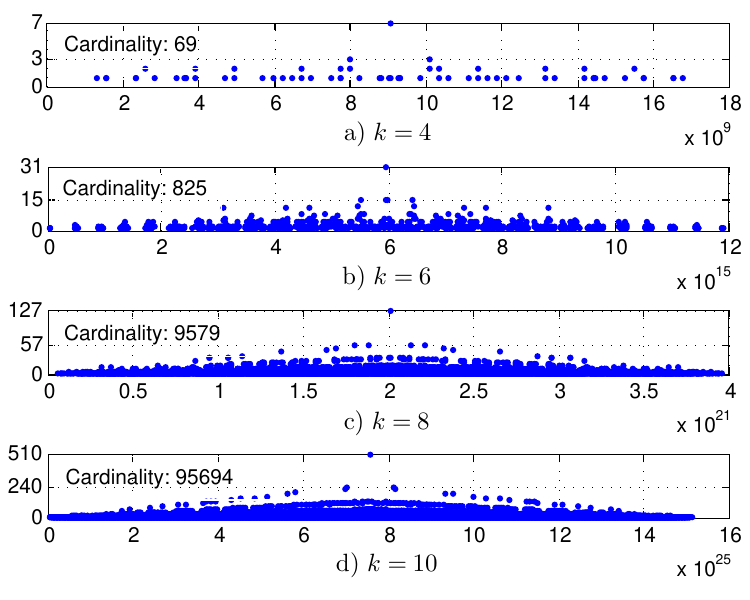}
\end{minipage}
\caption{Distribution of the elements in $\mathbb{\widehat{B}}$ when $k=4, 6, 8, 10$.}
\label{figure:distribution}
\end{figure}

\subsection{Other defects of CECRT}
\label{subsec:defect}

Obviously, CECRT encrypts any plaintext of fixed value as ciphertext having a constant value.
In particular, this encryption scheme fails to encrypt the fixed plaintext having only zero value. Besides these, CECRT suffers from the following defects.
\begin{itemize}
\item \textit{Invalid compression of CECRT}
By Property~3 of CRT, one can find that the function defined by CRT is bijective.
In fact, every lossless compression algorithm is a bijective function between the message
being encoded and the codeword. The size of the message domain
is substantially reduced by extracting the redundancy exist in the original message. However, there is no
such operation in CECRT. Refer to Eq.~(\ref{equiv:encrypt}), one can calculate the ratio between the bit length of the $j$-th cipher-element and that of
the corresponding plain-element:
\begin{IEEEeqnarray*}{rCl}
\frac{\lceil \log_2n \rceil}{\sum_{i=1}^k \lceil\log_2p_{w((j-1)k+i))} \rceil} & \ge &
\frac{\lceil \sum_{i=1}^k \log_2n_i \rceil}{ \sum_{i=1}^k\lceil \log_2(n_i)\rceil}\\
                 & \ge & \frac{\sum_{i=1}^k\lceil \log_2n_i\rceil-(k-1)}{ \sum_{i=1}^k\lceil \log_2n_i\rceil }\\
                 &= & 1-\frac{k-1}{ \sum_{i=1}^k\lceil \log_2n_i\rceil }.
\end{IEEEeqnarray*}
As $n_i\ge 256$, the above ratio is greater than or equal to $\frac{7+(1/k)}{8}$, which means that the compression performance of CECRT is marginal. Even worse,
it may produce a ciphertext which is longer than the plaintext. For the cipher-image shown in Fig.~\ref{figure:encryptpeppers}b), one can find that the expansion ratio is $\frac{\lceil\log_2(311)\rceil+\lceil\log_2(313)\rceil+\lceil\log_2(317)\rceil+\lceil\log_2(293)\rceil}{4\cdot 8}
=\frac{9+9+9+9}{32}=36/32=9/8$.

\item \textit{Low sensitivity with respect to change of plaintext}

A secure cryptographic algorithm is expected to possess the \textit{avalanche effect}, i.e., a tiny change in plaintext will cause each bit
of the ciphertext change at a probability of one half. Unfortunately, CECRT fails to have this desired property.
When two plaintexts $\mathbf{P}=\{p_i\}_{i=1}^L$ and $\mathbf{P}^*=\{p^*_i\}_{i=1}^L$ satisfying $p_{w((j-1)k+i)}-p^*_{w((j-1)k+i)}\equiv d$ for
one $j$, one can assure that the difference of the corresponding cipher-elements is $(c_j-c^*_j)=\sum_{i=1}^{k} (e_i \tilde{n}_i) d \bmod{n}=d\bmod n$.
The low sensitivity with respect to a general change of plaintext can be supported by the findings in \cite{Li:RobustCRT:IEEETSP2009}, which proved that the
recovery of a large integer from its remainders using CRT is not sensitive to the change (error) in the remainders.

\item \textit{Improper usage of CRT moduli as sub-key}

As mentioned in \cite{Knuth:ACP:1997}, the moduli of CRT can be simply set as some powers of 2 minus one when
they are fixed parameters. However, the configuration become much complex when they are dynamic.
This is attributed to the fact that the probability that any $k$ positive integers are relatively prime is given by
\[A_k=\prod_{p \rm\;prime}\left(1-\frac{1}{p}\right)^{k-1}\left(1+\frac{k-1}{p}\right)\]
\cite{Toth:ProbPrime:FQ2002}. The value of $A_k$ decreases exponentially with respect to
$k$, where $A_3\approx 0.286$, $A_8\approx0.001$, and $A_{10}<10^{-4}$. From Eq.~(\ref{equiv:encrypt}), one can see that
CECRT of a given secret key generates at most $2^k$ different cipher-elements. When the plaintext is gray-scale image, the value of $k$ should not be less than 8. Therefore, an efficient scheme for generating
a set of coprime integers under the control of a secret key has to be found. Otherwise, it would cost a large amount of computation
to search and verify eligible sub-keys. In addition, a change in $n_i$ only influences the decryption of $1/k$ plain-elements, which
hardly meets the requirement of a secure encryption scheme, e.g. a small change in the key should cause a drastic
change in the reconstructed plaintext.
\end{itemize}

%\fi
\section{Conclusion}

The security of a class of encryption schemes using the Chinese Remainder Theorem has been analyzed in detail.
Based on some properties of CRT, the sub-key used in the confusion part of CECRT can be easily recovered with only one pair of chosen plaintext and the corresponding ciphertext. Then, the permutation part can be broken with the reported general method.
In addition, other defects of CECRT are reported to facilitate the proper use of
CRT in cryptography. The work in this paper may be extended to analyze the security of other cryptographic
applications also based on CRT.

\section*{Acknowledgement}

This research was supported by the National Natural Science Foundation of China (No.~61100216), and the Alexander von Humboldt Foundation of Germany.

\bibliographystyle{elsarticle-num}
%\nocite*
\bibliography{CRT}
\end{document}